\documentclass[12pt]{article}
\pdfoutput=1
\usepackage[margin=0.8in]{geometry}
\usepackage{amsmath,amssymb,latexsym}
\usepackage{graphicx} 
\usepackage{booktabs}
\usepackage{hyperref}
\usepackage[numbers,sort&compress]{natbib}
\usepackage[toc,page]{appendix}

\usepackage{amsthm}
\newtheorem{theorem}{theorem}
\usepackage{authblk}

\usepackage{bbm}
\newcommand{\ind}{\ensuremath{\mathbbm{1}}}

\newcommand{\bR}{\ensuremath{\mathbf{R}}}
\newcommand{\bbf}{\ensuremath{\mathbf{f}}}

\newcommand{\bg}{\ensuremath{\mathbf{g}}}

\newcommand{\bp}{\ensuremath{\mathbf{p}}}

\newcommand{\bX}{\ensuremath{\mathbf{X}}}
\newcommand{\bY}{\ensuremath{\mathbf{Y}}}
\newcommand{\bU}{\ensuremath{\mathbf{U}}}
\newcommand{\bV}{\ensuremath{\mathbf{V}}}
\newcommand{\bz}{\ensuremath{\mathbf{z}}}

\newcommand{\bE}{\ensuremath{\mathbf{E}}}
\newcommand{\bI}{\ensuremath{\mathbf{I}}}

\newcommand{\btheta}{\ensuremath{\boldsymbol{\theta}}}

\newcommand{\bbR}{\ensuremath{\mathbb{R}}}

\graphicspath{{figs/}}

\title{Neural Representations in Hybrid Recommender Systems:\\ Prediction versus Regularization}

\author[ \hspace{-1ex}]{Ramin Raziperchikolaei}
\author[ \hspace{-1ex}]{Tianyu Li}
\author[ \hspace{-1ex}]{Young-joo Chung}
\affil[ ]{Rakuten Institute of Technology, Rakuten, Inc.}
\affil[ ]{\textit {\{ramin.raziperchikola,tianyu.li,youngjoo.chung\}@rakuten.com}}
\date{} 

\hypersetup{
pdftitle={Neural Representations in Hybrid Recommender Systems: Prediction versus Regularization},
pdfauthor={Ramin Raziperchikolaei, Tianyu Li,  Young-joo Chung},
pdfkeywords={Hybrid Recommender Systems, Autoencoders, Neural Collaborative Filtering}
}

\begin{document}

\maketitle
\begin{abstract}
Autoencoder-based hybrid recommender systems have become popular recently because of their ability to learn user and item representations by reconstructing various information sources, including users' feedback on items (e.g., ratings) and side information of users and items (e.g., users' occupation and items' title). However, existing systems still use representations learned by matrix factorization (MF) to predict the rating, while using representations learned by neural networks as the regularizer. In this paper, we define the neural representation for prediction (NRP) framework and apply it to the autoencoder-based recommendation systems. We theoretically analyze how our objective function is related to the previous MF and autoencoder-based methods and explain what it means to use neural representations as the regularizer. We also apply the NRP framework to a direct neural network structure which predicts the ratings without reconstructing the user and item information. We conduct extensive experiments on two MovieLens datasets and two real-world e-commerce datasets. The results confirm that neural representations are better for prediction than regularization and show that the NRP framework, combined with the direct neural network structure, outperforms the state-of-the-art methods in the prediction task, with less training time and memory. 
\end{abstract}

\section{Introduction}
\label{s:intro}
The goal of recommender systems is to help users identify the items that best fit their personal tastes from a large set of items \cite{Ricci11}. To achieve this goal, recommender systems use different kinds of user and item information. One important source of information is the feedback of users on items, which could be implicit (e.g., click on a link, purchase, etc.) \cite{Rendle09,Dong19,Ricci11} or explicit (e.g., a rating between $1$ and $5$) \cite{Koren09,Ricci11}. On e-commerce platforms, predicting users' explicit feedback (e.g., ratings on reviews) is more desirable because it provides better insight about users' preferences. Therefore, we focus on predicting explicit feedback (i.e., ratings). 

The focus of this paper is on hybrid recommender systems, which use both feedback of the users on items and \emph{side information} to make prediction  \cite{Burkea07}. Side information of the users and items include the content of items (e.g., category, title, description, etc.) and profile of users (e.g., age, location, gender, etc.), respectively. Using the feedback and side information jointly helps the hybrid methods to overcome the limitation of other approaches that use either feedback or side information, and achieve state-of-the-art results \cite{Li18,Liu19}.

More specifically, assume we have a sparse rating matrix $\bR \in \mathbb{R}^{m \times n}$, where $m$ and $n$ are the number of users and items, respectively, $R_{jk}>0$ is the rating of the user $j$ on the item $k$, and $R_{jk}=0$ means the rating is unknown.  Assume the side information of all the users and items are represented by $\bX$ and $\bY$, respectively. \emph{The goal of hybrid methods is to predict the unknown ratings using the known ratings and the user and item side information.}

Deep neural networks have become popular in designing hybrid recommendation methods, mainly because of their ability in learning good representations. The most widely-used neural network structure in recommender systems has been (denoising) autoencoders \cite{Sedhain15,Li15,Wang15b,Strub16,Dong17,Zhang17b,Li18}. These methods define $\bg^{u}()$, $\bbf^{u}()$, $\bg^{i}()$, $\bbf^{i}()$ as the user's encoder, user's decoder, item's encoder, and item's decoder, respectively. The outputs of the encoders $\bg^{u}()$ and $\bg^{i}()$ are the learned neural representations of the users and items. They also consider $\bU \in \mathbb{R}^{m \times d}$ and $\bV \in \mathbb{R}^{n \times d}$ as the $d$-dimensional representations of the users and items, respectively. Their objective function can then be written as:
\begin{multline}
	\label{e:autoobj}
	\hspace{-2ex}\min_{\bU,\bV,\btheta}L(\bbf^{u}(\bg^{u}(\bR,\bX))) + L(\bbf^{i}(\bg^{i}(\bR,\bY))) + \lambda_1 \sum_{j,k} \ind(R_{jk}>0)  ||R_{jk} - \bU_{j,:} \bV_{k,:}^T ||^2 +\\  \lambda_2 ||\bU - \bg^{u}(\bR,\bX) ||^2 + \lambda_3 ||\bV - \bg^{i}(\bR,\bY) ||^2  + \text{reg.\ terms},
\end{multline}
where $\btheta=[\btheta_{f^u},\btheta_{g^u},\btheta_{f^i},\btheta_{g^i}]$ contains all the parameters of the two autoencoders and $\bU_{j,:}$ denotes the $i$th row of the matrix $\bU$.  The indicator function $\ind(arg)$ returns $1$ when $arg$ is true, and $0$ otherwise. The rating of the user $j$ on item $k$ is approximated by the dot product of the $\bU_{j,:}$ and $\bV_{k,:}$.

We divide the objective function of Eq.~\eqref{e:autoobj} into three parts:
\begin{enumerate}
\item The reconstruction losses in the first two terms try to reconstruct the ratings and the side information of the users and items. 
\item The MF in the third term decomposes the rating matrix into user and item representations, which will be used for the prediction later. 
\item The fourth and fifth terms try to keep the representations learned by MF in some distance from the neural representations. As we argue in Section.~\ref{s:prop}, these terms play the role of regularizer, which keep the representations from converging to the solution of MF. The hyper-parameters $\lambda_2$ and $\lambda_3$ determine how close the two representations should be from each other. 
\end{enumerate}

Note that the reconstruction loss in Eq.~\eqref{e:autoobj} is slightly different from one work to the others. In \cite{Li15}, each autoencoder reconstructs the repeated versions of the side information. In \cite{Dong17}, each autoencoder reconstructs both the ratings and the side information, where the side information has been added to each layer of the network. \citet{Li18} proposed to reconstruct multiple sources of side information of the users and items, in addition to the ratings.

Autoencoder-based methods optimize the objective of Eq.~\eqref{e:autoobj} by alternating over the following two steps: 1) fix the network parameters and optimize over $\bU$ and $\bV$ and 2) fix $\bU$ and $\bV$ and train the parameters of the two autoencoders.

This approach has three main issues. First, the motivation behind using neural representation for the regularization purpose is unclear. Also, it is difficult to decide how far/close the neural and MF representations should be from each other, i.e., it is difficult to set the hyper-parameters $\lambda_2$ and $\lambda_3$. Second, optimization is difficult and time-consuming because  1)  the autoencoders have many parameters, 2) the matrices $\bU$ and $\bV$ are huge, and 3) the matrices and parameters need to be optimized several times in alternation. Third, the dot product to predict ratings from representations $\bU$ and $\bV$ might not be sufficient to combine the two representations.

To solve the above issues, we introduce the \textbf{N}eural \textbf{R}epresentation for \textbf{P}rediction (\textbf{NRP}) framework that learns one set of user and item representations from the neural networks and uses them for the prediction directly, instead of using them as the regularizer. Here are the contributions of our paper:
\begin{itemize}
	\item  We propose the NRP framework and apply it to the autoencoder structure, analyze its objective function and optimal solution, and compare it with the previous approaches based on autoencoder and MF.
	\item We introduce a direct neural network structure integrated with our framework to obtain more expressive power and training efficiency. 
	\item We conduct experiments on two MovieLens datasets and two real-world e-commerce datasets and demonstrate that 1) neural representations perform better in prediction than regularization and 2) the direct neural network structure combined with the NRP framework outperforms previous methods, while having faster training and less memory usage.
\end{itemize} 

\section{Related work}
\label{s:related}
\paragraph{Factorization models.}
Matrix factorization (MF)  decomposes the rating matrix into two low-rank user and item matrices (representations), such that the dot product of the representations approximates the rating matrix\cite{Koren08,Takacs08,Koren09}. Since MF only uses the ratings, its performance degrades significantly when the rating matrix is highly sparse or when we have new users and items in the system (cold-start problem) \cite{Ricci11}. 

\paragraph{Autoencoders for hybrid recommender systems.}
These methods use the autoencoder structure to extract features from side information and combine such features with the feedback pattern to predict ratings.\cite{Li15,Wang15b,Strub16,Dong17,Zhang17b,Li18}.  Recent works \cite{Li15,Dong17,Li18} utilize all the side information and ratings of users/items by training two autoencoders. As explained in the introduction, these methods use MF's representations for the prediction. There are also cases where the MF and neural representations are mixed for prediction.  In CDL \cite{Wang15b} and AutoSVD \cite{Zhang17b}, the item's representation comes from the autoencoder and the user's representation comes from the MF. These methods use the side information of the items as the only source of information, and use dot product to combine the representations. In our method, both users' and items' representations are generated from the neural network, where the inputs are the user/item rating vectors and side information.

\paragraph{Deep learning to model implicit feedback in collaborative filtering.}
DMF \cite{Xue17}, NeuMF \cite{He17}, and DeepCF \cite{Dong19} are examples of the recent works which combine the deep networks and collaborative filtering (CF) to predict the implicit feedback. These methods do not use the autoencoder structure, as the goal of autoencoders is to extract representations from the side information. Our method differs from these methods in several ways. First, our approach is a general framework that can be applied to different network structures, such as autoencoders. Second, our method uses both side information and ratings to predict the ratings, while the input to the CF-based methods is ratings and/or ids. Third, our approach learns a single representation per user and item, while NeuMF and DeepCF learn two representations. Lastly, our method uses an MLP to map the representations to prediction, while DMF uses dot product.

\paragraph{Deep learning for content-based recommendation.}
The main idea is to extract representations from the side information such as users' profiles and items' descriptions. DSSM \cite{Huang13} maximizes/minimizes the similarity between the representations of the query and the clicked/not-clicked documents. MV-DNN \cite{Elkahky15} is an extension of the DSSM, where the document has multiple views. The problem definition here is different from ours, as these methods do not consider the previous ratings of the users on items. 

\paragraph{Deep Learning for high-order interactions.}
In web/app recommender systems, the input usually comes in the form of high-dimensional and sparse categorical features.  Factorization Machines \cite{Rendle10} proposed to model high-order interactions by learning an embedding vector per feature. This idea has been combined with the embedding layer of the neural networks in several recent works, such as Wide\&Deep\cite{Cheng16}, NFM \cite{He17a}, and DeepFM\cite{Guo17}. Similar to the content-based methods, the problem definition is different from us, as these methods do not consider the past ratings of the users on items.

\section{Our proposed method}
\label{s:prop}
Our main idea is to remove the MF terms and use the neural representations of the users and items for the prediction task. For concision, we call our method \textbf{NRP}, which stands for \textbf{N}eural \textbf{R}epresentation for \textbf{P}rediction. We first apply our framework to the autoencoder structure and show how it is related to the previous autoencoder-based methods. Then, we integrate this framework with a direct neural network structure.

\subsection{NRP with autoencoders}
\label{s:prop_auto}
Similar to the previous works, our model contains two autoencoders, one for the users and one for the items. The difference is that the encoders' outputs are the only user/item representations in our model. Here is our objective function:
\begin{align}
	\label{e:our_auto} 
  &\min_{\btheta} L(\bbf^{u}(\bg^{u}(\bR,\bX))) + L(\bbf^{i}(\bg^{i}(\bR,\bY))) + \\
  &\lambda_1 \sum_{j,k}  \ind(R_{jk}>0) ||R_{jk} - \bg^{u}(\bR_{j,:},\bX_{j,:})^T \bg^{i}(\bR_{:,k},\bY_{k,:}) ||^2. \nonumber
\end{align}

To have an apple-to-apple comparison between our approach and the previous ones, we use the same loss functions and the decoder and encoder structures as aSDAE \cite{Dong17} and DHA \cite{Li18}.

Our objective in Eq.~\eqref{e:our_auto}  gives three advantages over the previous works: 1) the hyper-parameters $\lambda_2$ and $\lambda_3$, from Eq.~\eqref{e:autoobj}, are removed, 2) the number of parameters decreased as we removed $\bU$ and $\bV$, which helps in faster training and saving memory, and 3) the network can be trained end-to-end, as there is no need to optimize over $\bU$ and $\bV$. In Section ~\ref{s:exp}. we will see that these advantages lead to better prediction performance.

We now analyze the objective function of Eq.~\eqref{e:our_auto}, compare it with the one in Eq.~\eqref{e:autoobj}, and explain why the neural representations act as a regularizer in previous works. First, we rewrite our objective in \eqref{e:our_auto} as follows:
\begin{align}
\label{e:auto_equ}
	\min_{\btheta,\bU,\bV} Q(\btheta, \bU,\bV) =& L(\bbf^{u}(\bg^{u}(\bR,\bX))) + L(\bbf^{i}(\bg^{i}(\bR,\bY))) + \lambda_1 \sum_{j,k}  \ind(R_{jk}>0) ||R_{jk} -  \bU_{j,:}^T \bV_{k,:}||^2 \nonumber \\
	&\text{s.t.} \quad \bU = \bg^{u}(\bR,\bX) \quad \text{and} \quad \bV =  \bg^{i}(\bR,\bY).
\end{align}
The objective functions in Equations \eqref{e:our_auto} and \eqref{e:auto_equ} are  equivalent, so we focus on comparing \eqref{e:auto_equ} with \eqref{e:autoobj}.

We consider two special cases of the objective function in Eq.~\eqref{e:autoobj}. First, consider the case where $\lambda_2=\lambda_3=0$, which makes the last two terms $0$. The first two terms can also be removed since they do not contain the user/item representations $\bU$ and $\bV$. So only the MF term remains.

The second case is when $\lambda_2=\lambda_3 \to \infty$. The following theorem shows that in this case the two objective functions  in \eqref{e:autoobj} and \eqref{e:auto_equ} will be equivalent (i.e. they have the same optimal solution).

\begin{theorem}
The objective function of the Eq.~\eqref{e:autoobj}, with $\lambda_2=\lambda_3 \to \infty$,  has the same optimal solution as the objective function of the Eq.~\eqref{e:auto_equ}.
\end{theorem}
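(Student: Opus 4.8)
The plan is to read Eq.~\eqref{e:autoobj} with $\lambda_2=\lambda_3=\lambda$ as the quadratic-penalty relaxation of the equality-constrained program~\eqref{e:auto_equ} and to run the standard penalty-method limit $\lambda\to\infty$. Let $P_\lambda(\btheta,\bU,\bV)$ be the objective of~\eqref{e:autoobj} with both coupling weights equal to $\lambda$, let $Q$ be the objective of~\eqref{e:auto_equ}, and let $\mathcal{F}=\{(\btheta,\bU,\bV):\bU=\bg^{u}(\bR,\bX),\ \bV=\bg^{i}(\bR,\bY)\}$ be its feasible set, the maps $\bg^{u},\bg^{i}$ depending on $\btheta$. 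On $\mathcal{F}$ both penalty terms vanish, so $P_\lambda$ and $Q$ coincide there; evaluating at an optimal point $(\btheta^{\star},\bU^{\star},\bV^{\star})$ of~\eqref{e:auto_equ}, with optimal value $v^{\star}$, gives $P_\lambda(\btheta^{\star},\bU^{\star},\bV^{\star})=v^{\star}$, hence $\min P_\lambda\le v^{\star}$ for every $\lambda>0$; moreover $\min P_\lambda$ is nondecreasing in $\lambda$ since the penalty is nonnegative.

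The substance is the reverse direction, which I would handle as follows. Since the reconstruction losses, the $\lambda_1$-weighted matrix-factorization term, and the regularization terms are all nonnegative, the bound $\min P_\lambda\le v^{\star}$ forces, at any minimizer $(\btheta_\lambda,\bU_\lambda,\bV_\lambda)$ of $P_\lambda$, the estimate $\lambda\big(\norm{\bU_\lambda-\bg^{u}(\bR,\bX)}^{2}+\norm{\bV_\lambda-\bg^{i}(\bR,\bY)}^{2}\big)\le v^{\star}$, so the squared constraint violation is at most $v^{\star}/\lambda\to 0$. Passing to a convergent subsequence $(\btheta_\lambda,\bU_\lambda,\bV_\lambda)\to(\bar{\btheta},\bar{\bU},\bar{\bV})$ and using continuity of the encoder maps $\btheta\mapsto\bg^{u}(\bR,\bX)$ and $\btheta\mapsto\bg^{i}(\bR,\bY)$, the limit lies in $\mathcal{F}$; lower semicontinuity of the remaining terms then gives $Q(\bar{\btheta},\bar{\bU},\bar{\bV})\le\liminf_{\lambda\to\infty}P_\lambda(\btheta_\lambda,\bU_\lambda,\bV_\lambda)=\liminf_{\lambda\to\infty}\min P_\lambda\le v^{\star}$, so, being feasible, $(\bar{\btheta},\bar{\bU},\bar{\bV})$ is a global minimizer of~\eqref{e:auto_equ} and $\min P_\lambda\to v^{\star}$. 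Hence the limit points of the $P_\lambda$-minimizers are minimizers of~\eqref{e:auto_equ}, and conversely every minimizer of~\eqref{e:auto_equ} is asymptotically optimal for $P_\lambda$ because it attains $v^{\star}=\lim_{\lambda\to\infty}\min P_\lambda$; this is the precise sense of ``the same optimal solution.'' A tidier way to organize the same computation (which does not, however, dodge the obstacle below) is to minimize $P_\lambda$ over $(\bU,\bV)$ first with $\btheta$ fixed: the same penalty estimate puts the partial minimizer within $O(1/\sqrt{\lambda})$ of $(\bg^{u}(\bR,\bX),\bg^{i}(\bR,\bY))$, so the partially minimized objective tends to $Q$ restricted to $\mathcal{F}$, leaving only the minimization over $\btheta$.

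I expect the limit, not the algebra, to be the real obstacle. To carry it out rigorously one needs: existence of minimizers of $P_\lambda$ for each $\lambda$; boundedness of the minimizing family $\{(\btheta_\lambda,\bU_\lambda,\bV_\lambda)\}_{\lambda}$ so that convergent subsequences exist --- which is precisely why the nonnegative, coercive regularization terms matter and cannot simply be discarded; and continuity of the neural encoders in their parameters, which holds for the usual smooth or piecewise-linear activations and keeps membership in $\mathcal{F}$ stable under limits. Without these, one still gets $\min P_\lambda\to v^{\star}$ but not convergence of the solutions. A small bookkeeping caveat is that the unspecified ``reg.\ terms'' in~\eqref{e:autoobj} must be taken to coincide in the two problems, or at least to be independent of $\bU,\bV$ (as with weight decay on $\btheta$); the statement of the theorem implicitly assumes this.
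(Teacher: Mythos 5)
Your proposal is correct and rests on the same key inequality as the paper's proof: evaluate the penalized objective at a feasible optimum of Eq.~\eqref{e:auto_equ} to bound the penalized minimum by $v^{\star}$, then deduce that the constraint violation at the penalized minimizers is forced to vanish. The difference is in how the limit is handled. The paper treats ``the optimal solution of Eq.~\eqref{e:autoobj} with $\lambda_2=\lambda_3\to\infty$'' as a single fixed point $\bp^{*}$ and concludes its constraint violation is exactly zero, which lets it close the argument with the two-sided comparison $Q(\bp^{*})\le Q(\bar{\bp})\le Q(\bp^{*})$; strictly speaking such a point is not well defined, since the minimizer depends on $\lambda$ and only its violation $O(v^{\star}/\lambda)$ tends to zero. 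You instead work with the family of minimizers of $P_\lambda$ for finite $\lambda$ and pass to a convergent subsequence, invoking continuity of the encoders, lower semicontinuity, and boundedness (coercivity from the regularization terms) to show that every limit point is feasible and optimal for Eq.~\eqref{e:auto_equ}, with $\min P_\lambda\to v^{\star}$. This is the standard quadratic-penalty convergence theorem (cf.\ the Nocedal--Wright reference the paper itself cites) and is the rigorous version of what the paper asserts; your explicit list of the hypotheses needed (existence and boundedness of minimizers, continuity in $\btheta$, and the requirement that the unspecified ``reg.\ terms'' not depend on $\bU,\bV$) identifies exactly what the paper's shorter argument leaves implicit. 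What the paper's version buys is brevity and a clean statement about a single pair of optima; what yours buys is a precise meaning for ``same optimal solution'' in the limit and a correct treatment of the $\lambda$-dependence of the minimizers.
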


\begin{proof}
Let us define a new vector $\bp=[\btheta,\bU,\bV]$, containing all the parameters of the problem. Note that $Q(\bp)$, defined in Eq.~\eqref{e:auto_equ}, contains the first three terms of Eq.~\eqref{e:autoobj} and contains all the terms of Eq.~\eqref{e:auto_equ}.
We assume that $\bar{\bp} = [\bar{\btheta},\bar{\bU},\bar{\bV}]$ is the optimal solution of \eqref{e:auto_equ} and $\bp^* = [\btheta^*,\bU^*,\bV^*]$ is the optimal solution of the Eq.~\eqref{e:autoobj} when $\lambda_2=\lambda_3 \to \infty$. We replace $\bp^*$ and $\bar{\bp}$ in Eq.~\eqref{e:autoobj} to get the following inequality:
\begin{multline}
	\label{e:helper1}
	\lim_{\lambda_2=\lambda_3 \to \infty} Q(\bp^*) + \lambda_2 ||\bU^* - \bg^{u}(\bR,\bX;\btheta_{g^u}^*) ||^2+  \lambda_3 ||\bV^* - \bg^{i}(\bR,\bY;\btheta_{g^i}^*) ||^2 \le \\ Q(\bar{\bp}) + \lambda_2 ||\bar{\bU} - \bg^{u}(\bR,\bX;\bar{\btheta}_{g^u}) ||^2+  \lambda_3 ||\bar{\bV} - \bg^{i}(\bR,\bY,\bar{\btheta}_{g^i})||^2  = Q(\bar{\bp}).
 \end{multline}
The right side of the above inequality is $Q(\bar{\bp})$ since $\bar{\bp}$ is the optimal solution of Eq.~\eqref{e:auto_equ} and satisfies the constraints. By rearranging the last equation in \eqref{e:helper1}, we obtain
\begin{multline}
 ||\bU^* - \bg^{u}(\bR,\bX;\btheta_{g^u}^*) ||^2+ ||\bV^* - \bg^{i}(\bR,\bY,\btheta_{g^i}^*) ||^2 \le \lim_{\lambda_2 \to \infty} \frac{ Q(\bar{\bp}) - Q(\bp^*)   }{\lambda_2} = 0.
\end{multline}
To satisfy the inequality, i.e., making the sum of the norms $0$,  both norms have to be $0$. This means that $\bp^*$ is a feasible vector, i.e. $\bU^* = \bg^{u}(\bR,\bX;\btheta_{g^u}^*)$ and $\bV^* = \bg^{i}(\bR,\bY,\btheta_{g^i}^*)$. By considering feasibility of $\bp^*$ in Eq.~\eqref{e:helper1} (i.e., replacing $\bU^*$ by $ \bg^{u}(\bR,\bX;\btheta_{g^u}^*)$ and $\bV^*$ by  $\bg^{i}(\bR,\bY,\btheta_{g^i}^*)$ in Eq.~\eqref{e:helper1}), we get $Q(\bp^*) \le Q(\bar{\bp})$. Note that we assumed that $\bar{\bp}$ is the optimal solution of Eq.~\eqref{e:auto_equ}, so for two feasible points $\bar{\bp}$ and $\bp^*$ we have $Q(\bp^*) \ge Q(\bar{\bp})$. So the conclusion is that $Q(\bp^*) = Q(\bar{\bp})$. In other words, for $\lambda_2=\lambda_3 \to \infty$, the objective functions of the Eq.~\eqref{e:autoobj} and Eq.~\eqref{e:auto_equ} become equivalent.
\end{proof}

\begin{figure}
	\centering
	\begin{tabular}{c}
	 	 \includegraphics[width=.6\linewidth]{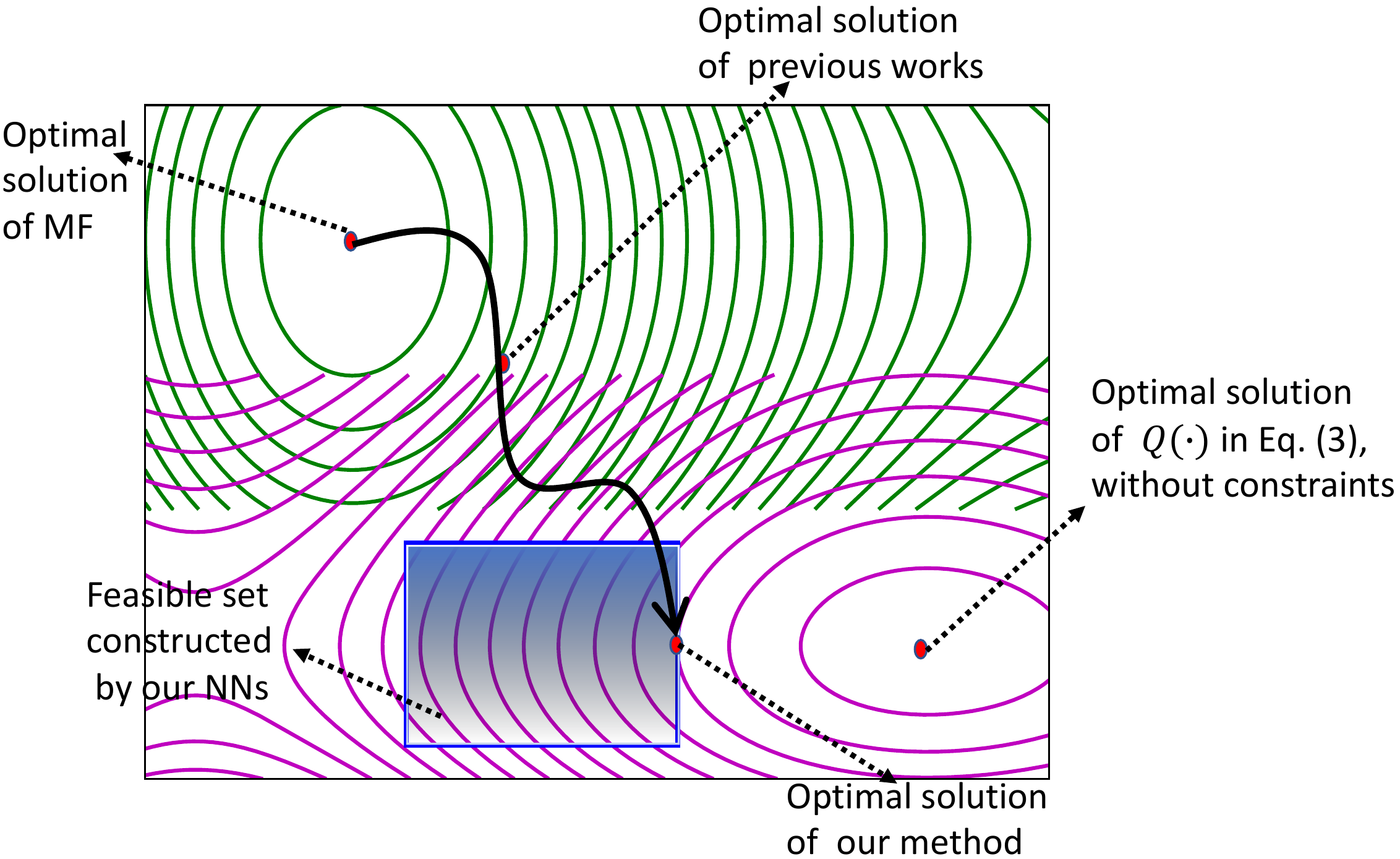}\\
	\end{tabular}
	\vspace{-1.8ex}
	\caption{Visualization of the optimal solutions of different methods. The figure shows the contours over the users and items representations ($\bU$ and $\bV$). \emph{Green contours:} contours of the MF, which is achieved by setting $\lambda_2=\lambda_2=0$ in Eq.~\eqref{e:autoobj}. \emph{Magenta contours:} contours of $Q(\cdot)$, the main term of our objective function in Eq.~\eqref{e:auto_equ}.  There is a path between the optimal solution of MF and ours. Previous approaches find a solution somewhere in the middle of the path. Our solution is always inside the feasible set created by the encoders.}\vspace{-1.5ex}
	\label{f:contour}
\end{figure}

Fig.~\ref{f:contour} shows a simple visualization of the objective functions and their optimal solutions. In this figure, the green contours correspond to the MF (by setting $\lambda_2=\lambda_3=0$ in Eq.~\eqref{e:autoobj}) and  the magenta contours correspond to $Q(\cdot)$, the main term of our objective function in Eq.~\eqref{e:auto_equ}. The feasible set, which satisfies our constraints in Eq.~\eqref{e:auto_equ}, has been shown by a blue rectangle. This feasible set contains the low-dimensional representations that can be created by the user and item encoders. The optimal solution of our objective function in Eq.~\eqref{e:auto_equ} lies where the contour line of $Q(\cdot)$ with the smallest value intersects the feasible region. 

By setting $\lambda_2=\lambda_3=0$ and increasing it to $\lambda_2=\lambda_3 \to \infty$, a path of solutions will be created, between the solution of the MF and our NRP autoencoder. The previous autoencoder methods use a fixed $\lambda_2>0$ and $\lambda_3>0$, so their optimal solution lies somewhere on the path. The smaller (larger) these hyper-parameters, the closer (farther away) the solution of the autoencoder-based methods will be to the MF's solution. \emph{We believe the neural representations act as the regularizer in previous works since they are only used to keep $\bU$ and $\bV$ away from the MF's optimal solution.}

A question arises here: can we optimize the objective of Eq.~\eqref{e:autoobj} with $\lambda_2=\lambda_3 \to \infty$ and expect to get the same result as our objective function in Eq.~\eqref{e:our_auto}? In practice, this will not happen for several reasons. First, as we set $\lambda_2=\lambda_3 \to \infty$, the Hessian of the objective function in Eq.~\eqref{e:autoobj}  becomes ill-conditioned, and the contours will have a "banana" shape rather than "elliptical" shape. The reason is that some eigenvalues of the Hessian will be in the order of $\infty$, while the other ones will be small. More information can be found in Chapter 17.1 of \cite{Nocedal06}. As a result, first-order methods iterate zigzag and slowly approach toward the optimal solution.

Second, note that previous autoencoder-based methods use alternating optimization over the parameters. Consider the step of optimizing over $\bU$ and $\bV$ for a fixed $\btheta$:
\begin{equation*}
		\hspace{-2ex}\min_{\bU,\bV} \lim_{\lambda_2=\lambda_3=\to \infty} \lambda_1 \sum_{j,k} \ind(R_{jk}>0)  ||R_{jk} - \bU_{j,:}^T \bV_{k,:} ||^2 +  \lambda_2 ||\bU - \bg^{u}(\bR,\bX) ||^2 + \lambda_3 ||\bV - \bg^{i}(\bR,\bY) ||^2 .
\end{equation*}
Note that even a small mismatch between the neural's and MF's representations, the error in the second and the third terms, makes the objective error $\infty$. So the solution to the above optimization problem will be $\bU=\bg^{u}(\bR,\bX)$ and $\bV=\bg^{i}(\bR,\bY)$. In other words, the first term, which is responsible for rating prediction, will always be ignored and the performance degrades significantly.

Finally, note that we have proved the equivalency of Eq.~\eqref{e:autoobj} and \eqref{e:our_auto} in terms of their optimal solutions. In practice, the objective functions are highly non-convex and we can expect the methods to end up close to a local solution.  As we optimize the neural network using SGD, our method starts from some initial point, traverses a path, and stops as soon as overfitting happens.  \emph{From start to end, the solution of our method remains inside the feasible set, while the final solution of the autoencoder-based methods can be anywhere in the space.}

\paragraph{Size of the feasible set.} The constraints of the  Eq.~\ref{e:auto_equ} determines the feasible set. Assuming $d$-dimensional representations, the size of the feasible set is determined by the number of $d$-dimensional vectors that can be generated by the user and item encoders, which depends on the complexity of the model, i.e., increasing the complexity of the encoders leads to a larger feasible set. In our visualization in Fig.~\ref{f:contour}, as the feasible set gets larger, the optimal solution of our method gets closer to the optimal solution of $Q(\cdot)$. As discussed before, the optimal solution of the previous autoencoder-based methods always stays somewhere in the middle of the path.

\begin{figure}
	\centering
 	\includegraphics[width=.6\linewidth]{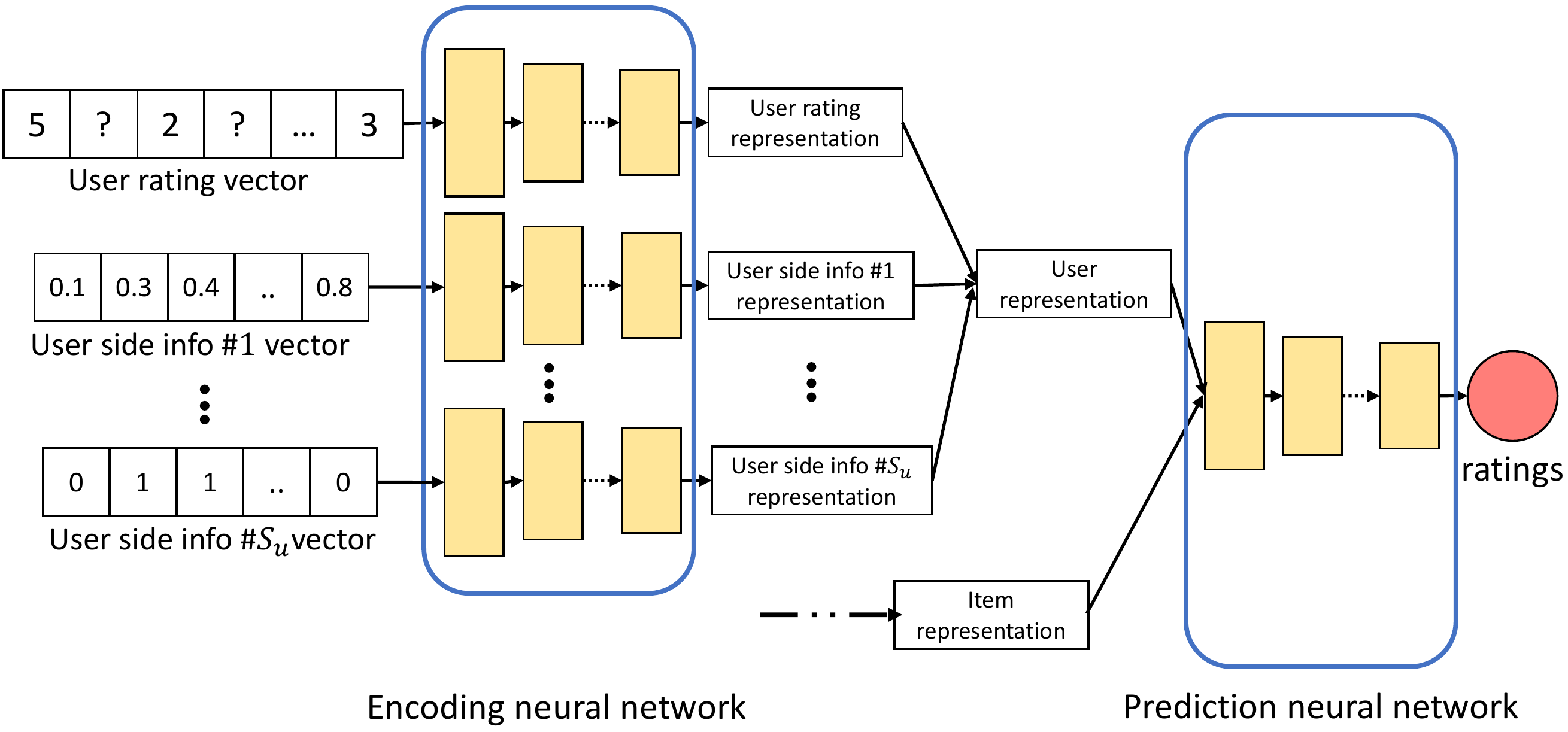}
 	\vspace{-1ex}
	\caption{The diagram of our direct neural network structure. Encoding neural network maps the multiple sources of the input to a low-dimensional user representation. Item representation is created by the same structure. Prediction neural network takes the joint (user,item) representation and predicts the rating.}\vspace{-1.5ex}
	\label{f:diag}
\end{figure}
\subsection{NRP with a direct structure}
\label{s:prop_direct}
We define the direct structure as a neural network structure without the decoders, which predict the ratings without reconstructing the user/item ratings and side information. Our main goal of designing the direct structure and combining it with NRP framework is to use it as a baseline. The comparison between the direct structure and the autoencoder-based methods let us know the effectiveness of the reconstruction based methods in hybrid recommendation systems. 

Our direct structure is achieved by making two modifications to our autoencoder structure. First, we remove the decoders from the structure, which leads to saving around 50\% of memory and faster optimization. Second, we use a set of fully connected layers to predict the final rating, instead of the dot product. This makes our model more expressive. Note that these fully connected layers add a small amount of memory to our model since they connect the low-dimensional user and item representations to the final rating. On the other hand, the decoders are usually huge, since they reconstruct the input ratings with thousands to millions of dimensions.

Fig.~\ref{f:diag} shows the overall architecture of our direct structure. We define three cooperating sub-networks: user encoding network, item encoding network, and prediction network. The user and item encoding networks generate the user and item representations, respectively, which are combined to get the joint (user,item) representation. Finally, the prediction network takes the joint representation and predicts the rating.  

\paragraph{Learning the user and item representations.} We focus on the user representation, as the item representation can be learned in the same way. We have access to multiple sources of information for each user, including ratings of the user on a subset of items, and $S_u$ sources of side information. The user encoding network maps each source to a low-dimensional space and then combines them to get the final user representation. 

One source of information is the ratings of the users on items. For the $j$th user, $\bR_{j,:}$ contains the ratings of this user on all items. We give this as the input to a set of fully connected layers and get the rating representation $\bg^{u}_{\text{rating}}(j) \in \bbR^{1\times d_{\text{rating}}}$ for the $j$th user.

We also learn one representation for each source of side information. Each source will be given as the input to a neural network, which maps it to a low-dimensional representation. The representation for the $s_u$th source is shown by $\bg^{u}_{s_u}(j) \in \bbR^{1\times d_{s_u}}$. The structure of the network for different sources could be different. For example, it could be a convolutional network for the images, a long-short term memory (LSTM) network for the text descriptions and titles, and fully connected layers for the other general type of inputs.

To get the final representation for the $j$th user, we concatenate the representations of all the sources. We show this by $\bz_j = [\bg^{u}_{\text{rating}}(j),\{\bg^{u}_{s_u}(j)\}_{s_u=1}^{S_u}] \in \bbR^{1\times d_{u}}$, where $[ \quad]$ is used for concatenation of the vectors.

We can get the $k$th item representation $\bz_k$ with minor changes to the user representation explained above. For the rating representation, the input should be a column of the rating matrix $\bR$, instead of the row. We concatenate the representations of the different sources of information to get the final item representation $\bz_k$.

\paragraph{Learning the rating.}
We first concatenate the user and item representations to get the joint representation $\bz_{jk}$. This joint representation is the input to another set of fully connected networks that try to predict the rating of the users on the items. We show this last neural network by $h()$. We define the objective function as the mean squared error between the output of our neural network and the true rating: 
\begin{align}
	\label{e:learn}
	&\frac{1}{mn}\min_{\btheta} \sum_{j=1}^{m}\sum_{k=1}^n \ind(R_{jk}>0)  ||R_{jk} - h(\bz_{jk}))||^2 + \lambda_1 ||\btheta||^2 \nonumber  \\
	 & \text{s.t.} \quad
       \bz_{jk} = [\bz_j,\bz_k], \  \bz_{j} = [\bg_{\text{rating}}^u(j),\{\bg_{s_u}^u(j)\}_{s_u=1}^{S_u}],\ 
		\bz_{k} = [\bg_{\text{rating}}^i(k),\{\bg_{s_i}^i(k)\}_{s_i=1}^{S_i}]
\end{align}
We jointly optimize all parameters of the neural network using stochastic gradient descent. Note that our method works as long as we have at least one source of information for users and one source of information for items.

\paragraph{Direct structure in the literature.} We noticed that different variants of the direct structure have been used in other areas of the recommendation systems, such as implicit feedback prediction from user-item interactions \citep{He17,Dong19} and modeling high-order user-item interactions \citep{Cheng16,Guo17}.  To the best of our knowledge, the direct structure has been used once in hybrid recommender systems by \citet{Shi18} (denoted by ACCM) to address the cold-start issue. There are three main differences between our direct structure and ACCM: 1) our approach uses interaction vector as the input, while ACCM uses user/item ID as the input, 2) our structure uses MLPs to map user-item interaction to the rating, while ACCM uses the dot product, and 3) our approach uses concatenation to combine user and item representation, while ACCM uses the weighted sum. We will show in our experiments that ACCM achieves comparable results to the autoencoder-baed methods, while our direct structure outperforms them. 

\paragraph{ID of the users and items as another source of information}  It is easy to adjust our model and objective function to include other sources, such as IDs. In our experiments, we found that adding IDs as another source does not improve the performance.  Since IDs might be useful in some other datasets, we briefly explain how to add this source to our model.

Let's focus on the user ID first. To learn the low-dimensional representation for the user IDs, we use an embedding layer. The embedding layer works as a look-up table; it takes the user ID and returns one row, which contains the embedding vector of the user. We  denote the embedding vector of the $j$th user by $\bg^{u}_{\text{ID}}(j)$ and add it to the $\bz_j$ in Eq.~\ref{e:learn}. The embedding vector of the $k$th item, $\bg^{i}_{\text{ID}}(k)$,  can be added to the $\bz_k$ in the same way. The rest of our objective function does not change.

Let's see how it works mathematically for the user embedding. We encode each user ID to a one-hot vector of size $m$, where $m$ is the total number of users. For the $j$th user, $\bI_{j,:}$ shows this encoding, where $\bI$ is an identity matrix of size $m\times m$. This one-hot vector is connected to a layer with the weight matrix $\bE^{(u)} \in \bbR^{m\times d_e}$. The output of the embedding layer is achieved by matrix multiplication $\bg^{u}_{\text{ID}}(j)=\bI_{j,:}\bE^{(u)} \in \bbR^{1\times d_e}$, which returns the $j$th row of $\bE^{(u)}$  containing the $d_e$ dimensional embedding of the $j$th user.

\section{Experiments}
\label{s:exp}
For each dataset, we randomly select $80\%$ of the ratings as the training set, $10\%$ as the validation set, and $10\%$ as the test set. We repeat the process three times to create three training/validation/test sets and report the mean and standard deviation of each method on these three sets. Unless otherwise stated, we use bag of words (BoW) to represent the item and user side information.

\paragraph{Datasets.}
We use four datasets in our experiments. Table~\ref{t:ds} lists the number of users and items, and the amount of sparsity of each dataset.
\begin{enumerate}
\item ml100k \cite{Harper15}. The dataset contains $100\,000$ ratings ($1$ to $5$) from around $1\,000$ users on $1\,600$ movies. The user side information contains age, gender, occupation, and zip code; the size of the feature vector is $879$. The item side information contains the movie title and the genre; the size of the feature vector is $2\,479$.
\item ml1m \cite{Harper15}. It contains $1$ million ratings of around $6\,000$ users on $4\,000$ movies. The content of the user and item side information are the same as ml100k. The dimensions of the user and item side information are $92$ and $4\,606$, respectively.
\item Amazon review data: Grocery and Gourmet Food \cite{He16a}. The dataset contains $508\,800$ Amazon's Grocery and Gourmet Food product review from $86\,400$ users on $108\,500$ items. There is no user side information. The item side information contains the title and the category, and is represented by a $36\,258$ dimensional vector.
\item Ichiba. It contains $1.5$ million Rakuten Ichiba\footnote{\url{https://rit.rakuten.co.jp/data_release/}}'s product review from $324\,000$ users on $294\,000$ items. The user side information contains age and gender represented by $121$ dimensional vector. The item side information contains the category with $455$ dimensions.
\end{enumerate}

\begin{table}[!t]
\caption{Summary of the four datasets. }
\vspace{-2ex}
\label{t:ds}
\begin{center}
\begin{tabular}{cccc} 
\toprule
     Dataset & \# of users & \# of items & sparsity\\
\midrule
     ml100k & $1\,000$ & $1\,600$ & $94\%$\\
\midrule
	ml1m & $6\,000$ & $4\,000$ & $96\%$  \\
\midrule
	Amazon R. & $86\,400$ & $108\,500$ & $99.994\%$  \\
\midrule
	Ichiba & $324\,000$ & $294\,000$ & $99.84\%$  \\	
\bottomrule
\end{tabular}
\end{center}
\vspace{-1.5ex}
\end{table}

\paragraph{Evaluation metrics.} We use the root mean square error (RMSE) and precision to evaluate the prediction performance. Let us assume set $T$ contains all the ratings in the test set, where $R_{jk} \in T$ is the actual rating of the user $j$ on item $k$. We define $\hat{R}_{jk}$ as the predicted rating, generated by a recommender system. Then, RMSE is defined as follows:
\begin{equation}
\text{RMSE} =\textstyle \sqrt{\frac{1}{|T|} \sum_{R_{jk}\in T} (R_{jk} - \hat{R}_{jk})^2}.
\end{equation} 

To report precision, we need to define the set of retrieved items and the set of relevant items per user. We define set $S_j$ as the set of items rated by the user $j$. To define the relevant items (groundtruth) for user $j$, we sort the items in $S_j$ based on their rating and pick the top $p\%$. At the test time, for user $j$, we predict the ratings of the items in  $S_j$ and consider the top $p\%$ of the items (with the highest predicted ratings) as the retrieved set. We define the precision for user $j$ as:
\begin{equation}
\text{precision} = \frac{|\text{relevant items}| \cap |\text{retrieved items}|}{|\text{retrieved items}|}
\end{equation}
We report the average precision of all the users in test set.

\begin{table}
\caption{Our proposed NRP framework, trained with the autoencoder and direct structures, versus MF, autoencoder-based methods, and ACCM (the direct structure of  \citet{Shi18}) on ml100k and ml1m datasets. We report mean and standard deviation of the methods. The first/second number in the fourth column is the number of parameters involved in learning the neural network's/MF's representations. Time refers to the training time per epoch. Our framework combined with the direct structure achieves better prediction results, faster training, and less memory usage compared to the autoencoder-based methods.} 
\label{t:auto}
\label{t:ml100aut}
\begin{center}
\begin{tabular}{@{}c@{\hspace{1ex}}c@{\hspace{1ex}}c@{\hspace{1ex}}c@{\hspace{1ex}}c@{\hspace{4ex}}}  
\toprule
& \multicolumn{4}{c}{\hspace{-3ex} \dotfill ml100k \dotfill}  \\
     method & RMSE & precision & \#  params. & time \\
\midrule
MF & $0.940 \pm 0.003$ & $68.4\% \pm 0.5$ & ($0$, $0.26$M)& $15$s \\
\midrule
DHA    &  $0.939 \pm 0.002$ & $68.2\% \pm 0.6$ & ($8.6$M, $0.26$M) & $85$s \\
\midrule
\textbf{NRP\textsubscript{DHA}} & $0.926 \pm 0.004$  & $68.4\% \pm 0.6$ &  ($8.6$M, $0$) & $68$s \\
\midrule
aSDAE  & $0.946 \pm 0.005$ & $68.0\%  \pm 1.1$&  ($13$M, $0.26$M) & $98$s \\
\midrule
\textbf{NRP\textsubscript{aSDAE}} & $0.910 \pm 0.008$ & $69.0 \% \pm 0.2$ & ($13$M, $0$) & $70$s \\
\midrule
ACCM  & $0.928 \pm 0.004$ & $68.2\% \pm 0.1$&  ($3.1$M, $0$) & $37$s \\
\midrule
ACCM\textsubscript{MLP}  & $0.925 \pm 0.005$ & $67.7\% \pm 0.3$&  ($3.3$M, $0$) & $37$s\\
\midrule
\textbf{NRP\textsubscript{direct}} & $\mathbf{0.899 \pm 0.006}$ & $\mathbf{70.2 \% \pm 0.4}$ &  ($4.7$M, $0$) & $42$s  \\
\bottomrule
\end{tabular}
\end{center}

\begin{center}
\begin{tabular}{@{}c@{\hspace{1ex}}c@{\hspace{1ex}}c@{\hspace{1ex}}c@{\hspace{1ex}}c@{\hspace{4ex}}}  
\toprule
& \multicolumn{4}{c}{\dotfill ml1m \dotfill} \\
     method & RMSE & precision & \#  params. & time \\
\midrule
MF & $0.892 \pm 0.004$ & $68.2\% \pm 0.3$ & ($0$, $1$M)& $45$s\\
\midrule
DHA  & $0.865 \pm 0.001$ & $69.3\% \pm 0.2$ & ($44$M, $1$M)& $1\,097$s\\
\midrule
\textbf{NRP\textsubscript{DHA}} & $0.855 \pm 0.002$ & $69.6\% \pm 0.2$ & ($44$M, $0$)& $1\,027$s \\
\midrule
aSDAE  & $0.879 \pm 0.005$ & $69.0\% \pm 0.1$ & ($66$M, $1$M)& $1\,155$s \\
\midrule
\textbf{NRP\textsubscript{aSDAE}} & $0.877 \pm 0.008$ & $68.5\% \pm 0.4$ & ($66$M, $0$)& $1\,055$s\\
\midrule
ACCM  & $0.856 \pm 0.002$ & $69.5\% \pm 0.3$ & ($11.5$M, $0$)& $450$s \\
\midrule
ACCM\textsubscript{MLP} & $0.865 \pm 0.002$ & $68.9\% \pm 0.2$ & ($11.8$M, $0$)& $470$s \\
\midrule
\textbf{NRP\textsubscript{direct}}  & $\mathbf{0.851 \pm 0.001}$ & $\mathbf{70.0 \% \pm 0.1 }$ &  ($22$M, $0$) & $640$s  \\
\bottomrule
\end{tabular}
\end{center}

\end{table}

\paragraph{Experimental setting.}
We implement our method using Keras with TensorFlow 1.12.0 backend. We ran all the experiments on a $12$GB GPU. For each method, we tried a set of activation functions (relu, selu, and tanh), a range of learning rates and regularization parameters from $10^{-1}$ to $10^{-5}$, a set of optimizers (Adam, SGD, and RMSprop), and picked the one that works best. For a fair comparison, all autoencoder methods have the same structure (\# of layers, neurons, etc.). Supplementary material at the end of this script contains the details of the experimental setting.

\paragraph{Neural representations are better in prediction than regularization.} In Table~\ref{t:auto}, we compare our proposed method with matrix factorization (MF), the autoencoder-based methods, DHA  \cite{Li18} and aSDAE \cite{Dong17}, and the attention-based direct structure, ACCM \cite{Shi18}, on MovieLens datasets. We also replaced the dot product of ACCM with MLP,  denoted by ACCM\textsubscript{MLP} in Table~\ref{t:auto}. Our framework applied to the same encoder-decoder structure as DHA and aSDAE are called  NRP\textsubscript{DHA} and NRP\textsubscript{aSDAE}, respectively. Our framework applied to the direct structure is called NRP\textsubscript{direct}. 

MF has the worst performance (the largest RMSE), which means that MF formulation with the L2 norm of weights as the regularization is not enough for the rating prediction. By setting the hyper-parameters carefully, both  DHA and aSDAE outperform SVD, which suggests that the neural network's representations are better regularizers than the L2 norm of the weights. Our methods NRP\textsubscript{DHA} and NRP\textsubscript{aSDAE} outperform the original DHA and aSDAE, respectively, in terms of RMSE and precision. This improvement comes from removing the MF terms, relying on neural representations, and staying inside the feasible set of neural network's output. ACCM, the direct structure of \citet{Shi18}, outperforms MF, DHA, and aSDAE, but shows similar performance to the NRP\textsubscript{DHA}. So by using the direct structure of ACCM we cannot conclude that the decoders are unnecessary to get the best results. Finally, NRP\textsubscript{direct} has the best performance and outperforms all autoencoder-based methods. This shows that removing the decoders, making the neural network free of reconstructing the inputs, and replacing the dot product with MLPs lead to learning a better model.

Note that NRP\textsubscript{direct} outperforms ACCM because of  1) using interaction vectors as the input instead of IDs and 2) using MLPs, instead of the dot product, to map the joint representation to the rating. To show this, we have replaced the dot product of ACCM with MLPs, denoted by  ACCM\textsubscript{MLP} in Table~\ref{t:auto}. We can see that the result of ACCM\textsubscript{MLP} is slightly better than ACCM in ml100k and slightly worse in ml1m. NRP\textsubscript{direct} outperforms ACCM\textsubscript{MLP}  in both datasets.

The last two columns of Table~\ref{t:auto} compare the number of learnable parameters and training time per epoch of each method.  DHA and aSDAE have the largest memory usage and training time, as they optimize the two representations alternatively. NRP\textsubscript{DHA} and NRP\textsubscript{aSDAE} achieve better training time and memory than DHA and aSDAE, respectively. Among the neural network-based methods, the direct structures, ACCM and NRP\textsubscript{direct},  have the fastest training and lowest memory usage because of their simple structure. Note that NRP\textsubscript{direct} has a larger number of parameters than ACCM. This is because NRP\textsubscript{direct} uses MLPs to map the interaction vectors to the low-dimensional representations, while ACCM uses embedding layers. 

\begin{table}[!t]
\caption{We report the precision by creating the relevant and retrieved sets using top $10\%$ and $25\%$ of the items. We put "OM" in the tables whenever we get an out-of-memory error.}
\vspace{-2ex}
\label{t:comp3}
\begin{center}
\begin{tabular}[c]{ccccc} 
\toprule
& \multicolumn{2}{c}{ml1m} & \multicolumn{2}{c}{ Amazon review} \\
 method & top $10\%$ & top $25\%$ & top $10\%$ & top $25\%$ \\
\midrule
MF & $55.6\% \pm 0.16$ & $68.05\% \pm 0.45$ & $64.9\% \pm 0.04$ & $71.5\% \pm 0.67$ \\
\midrule
Autorec & $57.6\% \pm 0.26$ & $69.5\% \pm 0.43$  & $62.6\% \pm 0.98$ & $69.8\% \pm 0.62$ \\
\midrule
NeuMF & $56.8\% \pm 0.12$ & $68.9\% \pm 0.48$  & $66.8\%  \pm 0.30 $ & $72.6\% \pm 0.09$ \\
\midrule
DSSM & $54.7\% \pm 0.35$ & $67.2\% \pm 0.30$ & NA & NA\\
\midrule
DHA & $57.4\% \pm 0.78$ & $69.3\% \pm 0.23$ & OM & OM\\
\midrule
NRP\textsubscript{DHA}  &  $57.3\% \pm 0.17$& $69.5\% \pm 0.32$  &  $66.6\% \pm 0.60 $& $72.9\% \pm 0.63$\\
\midrule
aSDAE & $56.4\% \pm 0.39$ & $68.7\% \pm 0.42$ & OM & OM\\
\midrule
NRP\textsubscript{aSDAE}  & $57.1\% \pm 0.3$ & $69.0\% \pm 0.41$&  $64.5\% \pm 0.43$& $71.2\% \pm 0.68$ \\
\midrule
HIRE   & $57.4\% \pm 0.08$ & $69.4\% \pm 0.55$&  OM & OM  \\
\midrule
NRP\textsubscript{direct} & $\mathbf{58.1\% \pm 0.16}$  & $\mathbf{69.9\% \pm 0.42}$ &   $\mathbf{67.3\%  \pm 0.38}$& $\mathbf{73.1\% \pm 0.24}$ \\
\bottomrule
\end{tabular}
\end{center}
\vspace{-1.7ex}
\end{table}

\begin{table}[!t]
\caption{RMSE of our NRP framework compared with the hybrid and collaborative filtering methods. Our approach outperforms the rest of the methods. }
\vspace{-2ex}
\label{t:comp1}
\begin{center}
\begin{tabular}[c]{ccccc} 
\toprule
 method & ml100k & ml1m  & Amazon & Ichiba\\
\midrule
MF & $0.940 \pm 0.003$ & $0.892 \pm 0.0004$ & $1.153 \pm 0.003$ & $1.00 \pm 0.104$ \\
\midrule
Autorec & $0.921 \pm 0.002$ &   $0.889 \pm 0.0003$ & $2.19 \pm 0.01$ &  $2.47 \pm 0.059$  \\
\midrule
NeuMF  & $0.948 \pm 0.005$ &  $0.886 \pm 0.001$ & $1.140 \pm 0.004$ &  $0.900 \pm 0.004$ \\
\midrule
DSSM & $0.934 \pm 0.002$ & $0.941 \pm 0.0004$ & NA & $0.913 \pm 0.003$ \\
\midrule
DHA & $0.939 \pm 0.002$ & $0.865 \pm 0.001$ & OM & OM \\
\midrule
NRP\textsubscript{DHA}   &  $0.926 \pm 0.004$  &   $0.855 \pm 0.002$ &  $1.135 \pm 0.002$&   OM \\
\midrule
aSDAE & $0.946 \pm 0.005$ & $0.879 \pm 0.005$ & OM & OM \\
\midrule
NRP\textsubscript{aSDAE}   &  $0.910 \pm 0.008$ & $0.877 \pm 0.008$ & $1.24 \pm 0.004$  &  OM  \\
\midrule
HIRE   &  $0.930 \pm 0.006$ & $0.861 \pm 0.004$ &  OM & OM  \\
\midrule
NRP\textsubscript{direct} & $\mathbf{0.897 \pm 0.003}$   &  $\mathbf{0.851 \pm 0.001}$ & $\mathbf{1.135 \pm 0.002}$  &  $\mathbf{0.889 \pm 0.002}$ \\
\bottomrule
\end{tabular}
\end{center}
\vspace{-2ex}
\end{table}

\paragraph{Comparison with other hybrid and collaborative filtering methods.} We compare our method with several baselines and recent works in Tables~\ref{t:comp3} and \ref{t:comp1}. MF \cite{Koren09} is a baseline collaborative filtering method that uses dot product of the user and item representations to predict the rating. Autorec \cite{Sedhain15} is another collaborative filtering method which uses autoencoders to reconstruct the ratings. NeuMF \cite{He17} combines deep and shallow networks and uses the user/item ID as the input to predict the ratings. NeuMF was originally proposed for the prediction of implicit feedback, but it can easily be modified to work with explicit feedback. DSSM \cite{Huang13} is a content-based recommender method, which uses deep neural networks to learn the representations. We modify DSSM to make it applicable to the explicit feedback prediction by connecting the user/item representations to a MLP with a mean squared error loss. DSSM is not applicable in the Amazon dataset, as there is no user side information. HIRE \cite{Liu19} is a hybrid method that considers the hierarchical user and item side information. DHA \cite{Li18} and aSDAE \cite{Dong17} are autoencoder-based methods, which use neural representations as the regularizer.

In Table~\ref{t:comp3} we report precision and in Table~\ref{t:comp1} we report RMSE of methods on four datasets. We can see that our method achieves the best results on different datasets.

\begin{table}[!t]
\caption{ RMSE of our method NRP\textsubscript{direct}, which is trained with and without user and item side information, on ml100k and Ichiba datasets. Side information helps in achieving better prediction performance.}
\vspace{-2ex}
\label{t:prof}
\begin{center}
\begin{tabular}[c]{ccccccccc} 
\toprule
& \multicolumn{2}{c}{ml100k} & \multicolumn{2}{c}{ Ichiba} \\
  &  precision   & RMSE  & precision   & RMSE  \\
\midrule
no side info  &  $0.901 \pm 0.006$ & $69.9\% \pm 0.43$   &  $0.895 \pm 0.003$ & $78.9\% \pm 0.002$  \\
\midrule
side info & $\mathbf{0.897 \pm 0.003}$ & $\mathbf{70.2\% \pm 0.42}$ & $\mathbf{0.889 \pm 0.002}$  &   $\mathbf{80.1\% \pm 0.004}$  \\
\bottomrule
\\[.5ex]
\end{tabular}
\end{center}
\vspace{-5ex}
\end{table}

\paragraph{Importance of side information.}
 We train our model with and without the user/item side information to verify that the side information can improve the prediction results. Table~\ref{t:prof} lists the RMSE and precision results on different datasets. We can see that the side information helps to achieve better results.
 
 \section{Conclusion}
The current autoencoder-based hybrid recommender systems learn two types of representations. One comes from the matrix factorization and is used for prediction. The other comes from neural networks and is used for regularization. In this paper, we proposed a new framework that uses the neural networks' representation directly for the prediction task. We showed that by applying our approach to the same autoencoder structure as previous works, we can achieve faster training and better performance. We also proposed a simpler network structure by removing the decoders and replacing dot product with MLP in autoencoders. Our approach combined with the new proposed framework outperformed the previous works. It also had a fast training and small memory usage compared to the autoencoder-based methods.

\small{
\bibliographystyle{plainnat}

}

\newpage
\section*{Supplementary material: experimental settings}
We implement our method using Keras with TensorFlow 1.12.0 backend. We ran all the experiments on a $12$GB GPU. For each method, we tried a set of activation functions (relu, selu, and tanh), a range of learning rates and regularization parameters from $10^{-1}$ to $10^{-5}$, a set of optimizers (Adam, SGD, and RMSprop), and picked the one that works best. For a fair comparison, all autoencoder methods have the same structure (\# of layers, neurons, etc.). Table~\ref{t:imp} lists these details for all the methods.

In the following, we give more information about the structure of the neural network and other hyper-parameters of each method. The notation $[a,b,c]$ denotes a network with three fully connected layers, where $a,b,$ and $c$ are the number of neurons in each layer.
\begin{itemize}
 \item \textbf{NRP\textsubscript{direct} }. The encoding networks are $[500, 200, 100]$, $[1000, 500, 300, 100]$, $[500, 300, 100]$, and  $[500, 300, 100]$ in ml100k, ml1m, Amazon, and Ichiba datasets.\\
  The prediction network is $[500, 200, 100, 50 ,1]$ in all datasets.
 
 \item \textbf{DHA \cite{Li18}  and aSDAE \cite{Dong17}} . The encoders and decoders in these two methods are the same as NRP\textsubscript{direct}, which makes the comparisons fair. These two methods have a large number of hyper-parameters and we tried a large range of values to get the best results. We have set $\lambda_2=\lambda_3=0.5$ and $\lambda_1=1$. We use SGD with learning rate of $0.1$ for the optimization over the variables $\bU$ and $\bV$.
 \item  \textbf{NRP\textsubscript{DHA} } and  \textbf{NRP\textsubscript{aSDAE} }. The encoders and decoders in these two methods are the same as NRP\textsubscript{direct} , DHA, and aSDAE, which make the comparisons fair.
\item \textbf{HIRE \cite{Liu19}}. We used the code provided by the authors without changing the hyper-parameters.
 \item \textbf{NeuMF  \cite{He17}}. This method has one deep and one shallow branches. The structure of the deep branch is the same as the encoding networks of the NRP\textsubscript{direct}. The shallow branch, which uses the IDs as the input, computes the element-wise product of the representations, as explained in the original paper.
 \item \textbf{Autorec  \cite{Sedhain15}} . We implemented the I-Autorec, which reconstructs the ratings of the items. This method overfits to the training data fast, even using small networks. The autoencoder is $[m,500,m]$ in ml100k, as suggested by the original paper, and $[m,500,300,100,300,500,m]$ in ml1m, Amazon, and Ichiba datasets, where $m$ is the number of users.
\item \textbf{DSSM  \cite{Huang13} } Each branch has the same structure as the encoder network in our  NRP\textsubscript{direct}.
\item \textbf{ACCM} \cite{Shi18}. It uses an embedding layer to map user and item IDs to low-dimensional representations. To map side information to the low-dimensional representations, it uses the same structure as the encoder network of the NRP\textsubscript{direct}.

\item \textbf{ACCM\textsubscript{MLP} \cite{Shi18}.} Same as the ACCM above. To map the joint user and item representation to the rating, it uses the same structure as the prediction network of the NRP\textsubscript{direct}.

\end{itemize}

\begin{table}[!h]
\caption{Implementation details: optimization method (optimizer), learning rate (lr), regularization (regu.), activation function (activ.).}
\label{t:imp}
\begin{center}
\begin{tabular}[c]{ccccc@{\hspace{4ex}}cccc} 
\toprule
& \multicolumn{4}{c}{\dotfill ml100k \dotfill}  & \multicolumn{4}{c}{\dotfill ml1m \dotfill} \\
  & optimizer & lr  & regu.\ & activ.\  & optimizer & lr  & regu.\ & activ.\ \\
\midrule
NRP\textsubscript{direct}  & RMSprop & $0.001$ & $0$ & selu & SGD & $0.0005$ & $0$ & selu \\
\midrule
aSDAE &  SGD & $0.1$ & $10^{-5}$ & tanh &  RMSprop & $0.001$ & $10^{-5}$ & tanh  \\
\midrule
DHA &  SGD & $0.1$ & $10^{-5}$ & tanh   &  RMSprop & $0.001$ & $10^{-5}$ & tanh  \\
\midrule
NRP\textsubscript{aSDAE} &  RMSprop & $0.001$ & $0.03$ & tanh &  RMSprop & $0.001$ & $0$ & tanh  \\
\midrule
NRP\textsubscript{DHA}  &  RMSprop & $0.001$ & $0.03$ & tanh &  RMSprop & $0.001$ & $10^{-4}$ & tanh  \\
\midrule
Autorec & Adam & $0.001$ & $0.005$ & tanh & Adam & $0.001$ & $0.001$ & tanh  \\
\midrule
NeuMF & RMSprop & $0.001$ & $0.01$ & selu & RMSprop & $0.001$ & $0.01$ & selu \\
\midrule
DSSM & RMSprop & $0.001$ & $0.001$ & selu & SGD & $0.0005$ & $0$ & selu  \\
\midrule
MF & SGD & $0.1$ & $0.0001$ & NA & SGD & $0.1$ & $0.0001$ & NA  \\
\midrule
ACCM & RMSprop & $0.001$ & $0$ & tanh & SGD & $0.001$ & $0$ & tanh  \\
\bottomrule
\end{tabular}

\begin{tabular}[c]{ccccc@{\hspace{4ex}}cccc} 
\toprule
& \multicolumn{4}{c}{\dotfill Amazon \dotfill}  & \multicolumn{4}{c}{\dotfill Ichiba \dotfill} \\
  & optimizer & lr  & regu.\ & activ.\  & optimizer & lr  & regu.\ & activ.\ \\
\midrule
NRP\textsubscript{direct}  & RMSprop & $0.001$ & $0.001$ & selu & SGD & $0.0005$ & $0.001$ & selu \\
\midrule
aSDAE &  OM & OM & OM & OM &  OM & OM & OM & OM  \\
\midrule
DHA &  OM & OM & OM & OM &  OM & OM & OM & OM  \\
\midrule
NRP\textsubscript{aSDAE}  &  RMSprop & $0.001$ & $0.001$ & selu &  OM & OM & OM & OM  \\
\midrule
NRP\textsubscript{DHA}  &  RMSprop & $0.001$ & $0.001$ & tanh &   OM & OM & OM & OM  \\
\midrule
Autorec & RMSprop & $0.001$ & $0$ & selu & Adam & $0.001$ & $0.001$ & tanh  \\
\midrule
NeuMF & RMSprop & $0.00001$ & $0.001$ & selu & SGD & $0.0001$ & $0.01$ & selu \\
\midrule
DSSM&  NA & NA & NA & NA & SGD & $0.0005$ & $0.001$ & selu\\
\midrule
MF & SGD & $0.1$ & $0.0001$ & NA & SGD & $0.1$ & $0.0001$ & NA  \\
\bottomrule
\end{tabular}
\end{center}
\end{table}

\end{document}